\newtheorem{theorem}{Theorem}
\newenvironment{proof}[1][Proof]{\noindent\textbf{#1.} }{\ \rule{0.5em}{0.5em}}
\newenvironment{sciabstract}{%
\begin{quote} \bf}
{\end{quote}}
\title{A Dynamical Systems Approach to Cryptocurrency Stability}
\author
{Carey Caginalp,$^{1, 2\ast}$\\
\\
\normalsize{$^{1}$Department of Mathematics, University of Pittsburgh, USA}\\
\normalsize{301 Thackeray Hall, Pittsburgh PA 15260, USA}\\
\normalsize{$^{2}$Economic Science Institute, Chapman University}\\
\normalsize{One University Drive, Orange CA 92866, USA}\\
\\
\normalsize{$^\ast$To whom correspondence should be addressed; E-mail:  carey\_caginalp@alumni.brown.edu.}
}
\date{}
\begin{document} 

% Double-space the manuscript.

% Make the title.

\maketitle

% Place your abstract within the special {sciabstract} environment.

\begin{sciabstract}
Recently, the notion of cryptocurrencies has come to the fore of 
public interest. These assets that exist only in electronic form, 
with no underlying value, offer the owners some protection 
from tracking or seizure by government or creditors. We model 
these assets from the perspective of asset flow equations developed 
by Caginalp and Balenovich, and investigate their stability under 
various parameters, as classical finance methodology is 
inapplicable. By utilizing the concept of liquidity price 
and analyzing stability of the resulting system of ordinary 
differential equations, we obtain conditions under which the system 
is linearly stable. We find that trend-based motivations and 
additional liquidity arising from an uptrend are destabilizing forces, 
while anchoring through value assumed to be fairly recent price 
history tends to be stabilizing.
\end{sciabstract}

% In setting up this template for *Science* papers, we've used both
% the \section* command and the \paragraph* command for topical
% divisions.  Which you use will of course depend on the type of paper
% you're writing.  Review Articles tend to have displayed headings, for
% which \section* is more appropriate; Research Articles, when they have
% formal topical divisions at all, tend to signal them with bold text
% that runs into the paragraph, for which \paragraph* is the right
% choice.  Either way, use the asterisk (*) modifier, as shown, to
% suppress numbering.

\section*{Introduction}

Blockchain technology enables large numbers of participants to make electronic
transactions directly without intermediaries, and has led, in recent years to
a new form of payment, and essentially to a new set of currencies called
cryptocurrencies. During 2017 the spectacular nine-fold rise in the price of
Bitcoin focused the spotlight of public attention on cryptocurrencies that
evolved into a new asset class. Following the pattern of other nascent assets,
speculators dominated trading and pushed prices toward a bubble.

As with some other asset bubbles of the past, notably the dot-com frenzy of
the late 1990s, the emergence of a new technology clouded judgements about the
basic value of the asset.

In almost all cases, unlike traditional equities, cryptocurrencies have no
tangible value, and even their creation is often shrouded in mystery
\cite{BER17}. For Bitcoin, the creation of additional units is relegated to a
process termed mining, named after the old technology of mining gold. In this
electronic version, computing power is used to solve complex mathematical
problems, and once solutions are found, the miner is rewarded with some units
of the cryptocurrency. It is peculiar that this brand new technology is
coupled with the organizational hierarchy of centuries ago that featured the
hegemony of gold miners and traders who held power over the economic lives of
people without any accountability. Thus, the advances in technology are
coupled with a regressive organizational structure. Major currencies such as
the US dollar are controlled by officials appointed by elected
representatives; assets such as common stocks are governed by a board of
directors that are elected by the shareholders whose rights are assured by
law, albeit through a circuitous process. Changes made by cryptocurrencies are
often made at an ad hoc meeting of the developers or miners, reminiscent of
tribal chiefs meetings of more primitive eras.

In certain types of securities, i.e. exchange-traded funds (ETFs), a certain
group of traders known as authorized participants have the capacity to demand
the underlying shares, preventing the price of the ETF from straying too far
from its fundamental value. Securities held in a brokerage account are insured
up to a value of several million dollars by the federal government against
circumstances such as hacking or company insolvency.

Conversely, for cryptocurrencies, many of these rules have been lacking and a
few have been recently instituted. First, these cryptocurrencies have no
underlying assets. Further, there exists no mechanism by which one can redeem
any value from a bank or other institution. Second, even the country of
residence (let alone a business address) of the creators is merely
speculative, making it unclear to which court one could possibly appeal in the
case of a grievance. The original "developer" for Bitcoin, for example, is
known only by a pseudonym. Third, the individual investor has no influence
over something as simple as the number of Bitcoins in existence. Instead,
these decisions are generally relegated to the heads of electronic Bitcoin
mining operations, whose interests may be disjoint from investors'. Bitcoin is
currently set to be capped at 21 million units, but there is no legal obstacle
to prevent an increase at the whim of the miners. Fourth, many individual
investors and some academicians tacitly assume that the laws and protections
afforded by the purchase of securities through stock exchanges such as NYSE
must also apply to cryptocurrencies. Recently, the Securities and Exchange
Commission (SEC) has shuttered some initial coin offerings (ICOs)
\cite{BIG17,SHI17} and announced future regulations to attempt to inject more
clarity into the marketplace. Finally, Bitcoin is vulnerable to hacking or
simply forgetfulness. Many holders of Bitcoin have their information stored in
exchanges, i.e. platforms that handle transaction and storage of the
cryptocurrency. These are hijacked with a disturbing regularity, and
litigation can be pending years later, as in the infamous Mt. Gox hacking of
2014 \cite{MCM14}.

Cryptocurrencies offer both opportunities and risks to society. On the one
hand, cryptocurrencies and technology underpinning them -- if designed
appropriately -- could be used to make transactions faster, safer and cheaper,
alongside other societal benefits \cite{COH18,HUT17}. \ A less apparent
feature is that they can make it more difficult (though not theoretically
impossible; see \cite{BOH16,ENS16}) for totalitarian governments to
expropriate savings, either directly or indirectly through currency inflation,
thereby depriving savers of a large fraction of their assets. In this way, a
proper cryptocurrency could lead to greater economic freedom, and render more
difficult the financing of a dictatorship. Indeed, this can be modelled by a
choice of two alternatives: either their home currency or cryptocurrency that
cannot easily be seized \cite{GET96,STO06}.

The risks presented by existing cryptocurrencies are multi-faceted. The
difficulty in tracing transactions facilitate illicit activity and its
financing. A less obvious -- and possibly the most significant -- risk arises
from the instability of prices of major cryptocurrencies. As the market
capitalization (number of units times the price of each unit) of the
cryptocurrencies rises, there is growing risk that a sharp drop in the price
of a cryptocurrency could have a cascading effect on other sectors of world
economy, particularly if borrowing is involved. During the period October 2017
to April 2018, the price of a Bitcoin unit rose from \$6,000 to \$20,000 and
back to \$6,000. The market capitalization of all cryptocurrencies during that
time period increased from \$170 billion to \$330 billion, peaking together
with Bitcoin in December 2017. While attention is often focused on the rise
and fall of the trading prices of these assets, the magnitude of the problem
of stability has increased significantly during this six month period. As
people become more accustomed to using these instruments, the market
capitalization may increase to several trillion -- i.e., a few percent of the
\$\ 75 trillion Gross World Product -- and many of the challenges will be critical.

Generally, the features of a financial instrument that might make it
attractive to speculators are undesirable to those who seek to use it as a
currency in daily transactions. Speculators see a greater opportunity in a
volatile market, as they can use technical analysis and expertise to profit at
the expense of the layperson. Conversely, large fluctuations on a day-to-day
basis create obstacles for common purchases or the pricing of service
contracts \cite{STO05}. Without stability in the marketplace, the
cryptocurrencies may simply become "a mechanism for a transfer of wealth from
the late-comers to the early entrants and nimble traders" \cite{CAG18a}. Thus,
a set of questions of critical importance deals with the potential stability
(or lack thereof) of Bitcoin or other cryptocurrencies, which is the main
topic of our paper.

The turbulence arising from the collapse of the housing bubble was a major
challenge for markets, but from a scientific perspective, it could be
addressed largely with classical methods \cite{FAM70,SCH00,SHI81}. However,
classical methods are not readily adaptable to studying cryptocurrencies, as
discussed below. We use a modern approach whereby an equilibrium price can be
determined and the stability properties established within a dynamical system
setting
\cite{CAG01,CAG98,CAG99,EHR12,JOS08,KAM89,POR94,PRI93,SHI81,SMI88,STO03,WAT81}.

\section{Modelling prices and stability}

Most of classical finance such as the Black-Scholes option pricing model has
its origin in the basic equation
\begin{equation}
\frac{1}{P}dP=\mu dt+\sigma dW\label{BS}%
\end{equation}
for the change in the relative price $P^{-1}dP$ in terms of the expected
return, $\mu,$ the standard deviation of the return, $\sigma,$ and independent
increments of Brownian motion, $dW.$ It is widely acknowledged that this
equation does not arise from compelling microeconomic considerations, nor
empirical data. But rather, it is mathematically convenient and elegant for
expressing and proving theorems (see \cite{CHA13} for discussion). Much of
risk assessment is based upon this model with an increasing array of adjustments.

The limitations of this basic model are apparent, for example, if one examines
the standard deviation of daily relative changes in the S\&P 500 index, which
is typically around $0.75\%.$ This leads to the conclusion that a $4.5\%$ drop
is a sixth standard deviation event, i.e., it occurs once every billion
trading days, while empirical data shows it is on the order of a few times per
thousand \cite{BAN18}.

Thus identifying risk on a large time scale based on the variance of a small
time scale can vastly underestimates risk.

Furthermore, the modeling of asset prices is generally based on the underlying
assumption of infinite arbitrage. While there may be some investors who are
prone to cognitive errors or bias in assessing value, the impact of their
trades will be marginalized by more savvy investors who manage a large pool of
money. Of course the inherent assumption is that there is some value to an
asset, based for example on the projection of the dividend stream, replacement
value, etc., and that the shareholder has a vote that allows him ultimately to
extract this value. For assets such as US\ Treasury bills, the model works
quite well, as the owner is assured of receiving a particular dollar amount
from the US at a specified time.

Herein lies the central problem for the application of classical theory to
cryptocurrencies:\ there is no underlying asset value, as noted above.
Cryptocurrencies constitute the opposite end of the market spectrum to
US\ Treasury Bills, in which an arbitrageur can confidently buy or sell short
based on a clear contract that will deliver a fixed amount of cash at a
predetermined time.

If fact, classical game theory would conclude that since everyone knows the
structure of the cryptocurrency, and understands that everyone else is also
aware, then the price should never deviate much from zero. Furthermore,
classical finance expressed through (\ref{BS}) would suggest that there is
some measure of risk based on the historical average of $\sigma,$ which will
be less helpful that it is for stock indexes as discussed above.

Our analysis begins with the fact that despite the absence of underlying
assets or backing, various groups have incentive to use it over traditional
currencies. In particular there are large groups who need to make transactions
outside of the usual banking system. Among these are (i) people with poor
credit who cannot obtain a credit or even a debit card, (ii) citizens of
totalitarian countries who fear expropriation of their savings, (iii) citizens
of countries with high inflation and a much lower interest rate, (iv) people
engaged in illicit activity, (v) people who espouse utilizing a new idea or technology.

Collectively, these groups constitute a core ownership of cryptocurrencies,
investing a sum that gradually grows with familiarity \cite{CLI17,DIN17,JUN17}%
. Meanwhile, the rising prices catch the attention of speculators who provide
additional cash into the system, but also bring motivations inherent in
speculation, namely momentum trading, or the tendency to buy as prices rise,
and analogously sell as prices fall \cite{TIR82}.

We assume a single cryptocurrency and that the price is established by supply
and demand without infinite arbitrage, and apply a modern theory of asset flow
\cite{CAG99}. This alternative approach relies on the notion of liquidity
price. The experimental asset markets presented a puzzle to the economics
community by demonstrating the endogenous price bubbles in which prices soared
well above any possible expectation of outcome \cite{POR94}. Caginalp and
Balenovich \cite{CAG99} observed that in addition to the trading price and
fundamental value (defined clearly by the experimental setup), there was an
additional important quantity with the same units: the total cash in the
system divided by the number of shares. Denoting this by liquidity value or
price, $L,$ they adapted earlier versions \cite{CAG90} of the asset flow model.

This approach leads to a system of ordinary differential equations, as
summarized below, whereupon equilibrium points can be evaluated and their
stability established as a function of the basic parameters.

\section{Modeling Cryptocurrency with Asset Flow Equations}

For brevity, we first present the full model which will be a nonlinear
evolutionary system that is based on \cite{CAG99} but with some key
differences for cryptocurrencies. We can then consider simpler models in which
some features are marginalized by setting parameters to zero and obtaining
$2\times2$ or $3\times3$ systems, enabling us to understand the key factors in stability.

We denote the trading price by $P\left(  t\right)  $, the number of units by
$N\left(  t\right)  $, the amount of cash available by $M\left(  t\right)  $,
and the liquidity price by $L\left(  t\right)  =M\left(  t\right)  /N\left(
t\right)  $. With $B$ as the fraction of wealth in the cryptocurrency, i.e.,
$B=NP/\left(  NP+M\right)  $, the supply and demand are given by $S=\left(
1-k\right)  B,$ $D=k\left(  1-B\right)  $ respectively, where $k$ is the
transition rate from cash to the asset. Using a standard price equation
\cite{WAT81} we write%
\begin{equation}
\tau_{0}\frac{1}{P}\frac{dP}{dt}=\frac{D}{S}-1.
\end{equation}
It follows that $B\left(  1-B\right)  ^{-1}=NP/M=P/L$ , so that the price
equation is%
\begin{equation}
\tau_{0}\frac{1}{P}\frac{dP}{dt}=\frac{k}{1-k}\frac{L}{P}-1\ .
\end{equation}
The variable $k$ is assumed to be a linearization of a tanh type function and
involves the motivations of the traders which are expressed through sentiment,
$\zeta=\zeta_{1}+\zeta_{2}$ where $\zeta_{1}$ is the trend component and
$\zeta_{2}$ is the value component. This construct has been studied, for
example, in closed-end funds \cite{AND02,CHE93,CHO93,LEE91} which frequently
trade either at a discount or premium to their net asset value. Writing the
term $k/\left(  1-k\right)  $ in terms of the $\zeta_{1}$ and $\zeta_{2}$ and
linearizing we have then%
\begin{equation}
\frac{k}{1-k}\tilde{=}1+2\zeta_{1}+2\zeta_{2}%
\end{equation}
and the price equation is then
\begin{equation}
\tau_{0}\frac{1}{P}\frac{dP}{dt}=\left(  1+2\zeta_{1}+2\zeta_{2}\right)
\frac{L}{P}-1\ .\ \label{priceEqn}%
\end{equation}

One defines $\zeta_{1}$ through two parameters, $c_{1},$ that expressed the
time scale of the trend following and $q_{1}$ the amplitude of this factor, as%
\begin{equation}
\zeta_{1}\left(  t\right)  =\frac{q_{1}}{c_{1}}\int_{-\infty}^{t}e^{-\left(
t-\tau\right)  /c_{1}}\frac{1}{P\left(  \tau\right)  }\tau_{0}\frac{dP\left(
\tau\right)  }{d\tau}d\tau\label{zeta1def}%
\end{equation}
Note that $L$ and $\zeta_{1}$ are linear functions of one another, but we
retain $L$ as a variable so the system is more easily generalized to
incorporate a time-dependent $L_{0}$. The valuation is more subtle for a
cryptocurrency. The only concept of value relates to fairly recent trading
prices. The first purchase with Bitcoin was for two slices of pizza for 10,000
Bitcoins \cite{PRI17}. The sense of value at that time was probably much less
than 2018 when people became accustomed to prices in the thousands of dollars.
We thus stipulate the definitions%

\begin{equation}
P_{a}\left(  t\right)  =\frac{1}{c_{3}}\int_{-\infty}^{t}e^{-\left(
t-\tau\right)  /c_{3}}P\left(  \tau\right)  d\tau,
\end{equation}%
\begin{equation}
\zeta_{2}\left(  t\right)  =\frac{q_{2}}{c_{2}}\int_{-\infty}^{t}e^{-\left(
t-\tau\right)  /c_{2}}\frac{P_{a}\left(  \tau\right)  -P\left(  \tau\right)
}{P\left(  \tau\right)  }d\tau,
\end{equation}
i.e., $\zeta_{2}$ represents the motivation to buy based on the discount from
the perceived value of the cryptocurrency. Finally, the liquidity will not be
constant but will be the sum of the core group's capital $L_{0}$ plus the
additional amounts arriving from speculators that is correlated with the
recent trend:%
\begin{equation}
L\left(  t\right)  =L_{0}+\frac{L_{0}}{c}q\int_{-\infty}^{t}e^{-\left(
t-\tau\right)  /c}\frac{\tau_{0}}{P\left(  \tau\right)  }\frac{dP\left(
\tau\right)  }{d\tau}d\tau\label{ldef}%
\end{equation}
We assume that $L_{0}$ is constant, but one can easily adapt the model to
include temporal changes in $L_{0}$ due to, for example, greater public
acceptance of cryptocurrencies. By differentiating (\ref{zeta1def}-\ref{ldef})
and combining the resulting equations with (\ref{priceEqn}) we obtain the 5x5
system of ordinary differential equations:%
\begin{align}
c_{3}P_{a}^{\prime} &  =P-P_{a},\nonumber\\
c_{2}\zeta_{2}^{\prime} &  =q_{2}\frac{P_{a}\left(  t\right)  -P\left(
t\right)  }{P_{a}\left(  t\right)  }-\zeta_{2},\nonumber\\
\tau_{0}P^{\prime} &  =\left(  1+2\zeta_{1}+2\zeta_{2}\right)  L-P,\nonumber\\
cL^{\prime} &  =1-L+q\left\{  \left(  1+2\zeta_{1}+2\zeta_{2}\right)
L-P\right\}  ,\nonumber\\
c_{1}\zeta_{1}^{\prime} &  =q_{1}\left(  \left(  1+2\zeta_{1}+2\zeta
_{2}\right)  \frac{L}{P}-1\right)  -\zeta_{1}.\label{fullsystemnon}%
\end{align}
We find a unique equilibrium at $\left(  P,P_{a},L,\zeta_{1},\zeta_{2}\right)
=\left(  1,1,1,0,0\right)  $. In other words, the only steady-state of the
system occurs when the price, the anchoring notion of fundamental value, and
liquidity price all coincide with the base liquidity value $L_{0}$
\cite{SHE05}. The time scale for price adjustment will be short as markets
adjust rapidly to supply/demand changes. Much longer will be the time scale
for observing the trend and reacting to under or over-valuation, and the
assessment of valuation anchored through weighted price averages. Moreover,
one might expect that the valuation is on an even longer time scale. Thus one
expects three time scales such that $\tau_{0}\ll c,c_{1},c_{2}\ll c_{3}$,
which we can scale as $c=c_{1}=c_{2}=1,$ and we allow arbitrary $\tau
_{0},c_{3}$ in the analysis.%
\begin{equation}
\left(
\begin{array}
[c]{c}%
\tau_{0}P\\
c_{3}P_{a}\\
L\\
\zeta_{1}\\
\zeta_{2}%
\end{array}
\right)  =\left(
\begin{array}
[c]{ccccc}%
-1 & 0 & 1 & 2 & 2\\
1 & -1 & 0 & 0 & 0\\
-q & 0 & q-1 & 2q & 2q\\
-q_{1} & 0 & q_{1} & 2q_{1}-1 & 2q_{1}\\
-q_{2} & q_{2} & 0 & 0 & -1
\end{array}
\right)  \left(
\begin{array}
[c]{c}%
P\\
P_{a}\\
L\\
\zeta_{1}\\
\zeta_{2}%
\end{array}
\right)  .\label{fullsystem}%
\end{equation}
Thus, the system is determined entirely by three parameters: $q$, the
attention to trend; $q_{1}$, a measure of the influence of delay times; and
$q_{2}$, the influence of fundamental value, along with the time parameters
$\tau_{0}$ and $c_{3}$. The question of stability can be investigated by
calculating the eigenvalues in the relevant parameter space, i.e. $\left(
q,q_{1},q_{2}\right)  \in\mathbb{R}_{+}^{3}$ (the first octant), along with
$\tau_{0}$ and $c_{3}$. In particular, the main question is whether the
maximal real part of the eigenvalues is positive, leading to instability, or
if they are all negative, yielding stability. One sees that there is a double
eigenvalue at $\lambda=-1$, and the other three eigenvalues remain negative if
the Routh-Hurwitz conditions \cite{SUR10} below are satisfied%
\begin{align}
\frac{1}{\tau_{0}}+\frac{1}{c_{3}}+Q  & >0,\nonumber\\
\left(  \frac{Q}{c_{3}}+\frac{1}{\tau_{0}}+2\frac{q_{2}}{\tau_{0}}+\frac
{1}{\tau_{0}c_{3}}\right)  \left(  \frac{1}{\tau_{0}}+\frac{1}{c_{3}%
}+Q\right)    & >\frac{1}{c_{3}\tau_{0}}.\label{5x5rh1}%
\end{align}
where we have set $Q=1-q-2q_{1}$. A sufficient set of conditions for
(\ref{5x5rh1}) to hold is the following:%
\begin{align}
\frac{1}{c_{3}}+\frac{1}{\tau_{0}}  & >q+2q_{1}=:K,\nonumber\\
\frac{1}{c_{3}}+\frac{1}{\tau_{0}}  & >\frac{K}{c_{3}}-2\frac{q_{2}}{\tau_{0}%
}.\label{5x5rh2}%
\end{align}
However, one can observe numerically that (\ref{5x5rh2}) are not necessary
conditions to satisfy (\ref{5x5rh1}). Also, if we set $q_{2}=0,$ we obtain the
simpler condition%
\begin{equation}
\frac{1}{c_{3}}+\frac{1}{\tau_{0}}>K
\end{equation}
for stability, which we will see describes a simpler model that excludes
valuation and the component of investor sentiment associated with it. We
sketch various cross-sections holding one of these parameters constant and
numerically computing eigenvalues across values of the other two. Note in 
Figure \ref{ptb} below that increasing $q_{2}$ induces a stabilizing effect, while
large $K$ serves to make the system less stable. We choose various values of
$\tau_{0}$ and $c_{3}$ in Figure \ref{Fig 5x5}.

This yields a number of results. First, as market participants focus greater
attention to the deviation of the asset from the acquired fundamental value
driven from the liquidity price, there is less room for prices to stray from
equilibrium. In addition, for a fixed $q_{2}$, the asset would experience
stability given that $K$ is large enough. Finally, for $K$ large enough, one
sees that we have instability for a large range of $q$, i.e., if investors
place too much emphasis on the relative trend, the asset price becomes
unstable. The shaded regions indicate the range of parameters for which the
system (\ref{fullsystem}) is stable.

When we set $q_{2}=0$, the model simplifies somewhat, leaving a linear
interface between the regions of stability and instability. We then have the
following theorem. We define $Q:=1-q-2q_{1}$.

\begin{theorem}
Consider the system (\ref{fullsystem}). With $q_{2}=0$, one has stability of
the system (\ref{fullsystem}) if and only if
\begin{equation}
Q+\frac{1}{\tau_{0}}>0\label{thmq2}%
\end{equation}
.
\end{theorem}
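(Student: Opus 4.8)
The plan is to reduce the stability question to the location of the roots of the characteristic polynomial of the system matrix in (\ref{fullsystem}), which I would compute explicitly after setting $q_2=0$. Let $A$ denote the coefficient matrix obtained from (\ref{fullsystem}) by dividing the $P$-row by $\tau_0$ and the $P_a$-row by $c_3$ (the remaining time constants being $1$), so that the genuine ODE system is $\dot x = Ax$. First I would observe that with $q_2=0$ the bottom row becomes $(0,0,0,0,-1)$, so the $\zeta_2$ equation decouples and contributes the eigenvalue $\lambda=-1$; the matrix is then block triangular, and the remaining spectrum is that of the $4\times4$ block $B$ acting on $(P,P_a,L,\zeta_1)$. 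Because the $P_a$-column of $\lambda I-B$ has the single nonzero entry $\lambda+1/c_3$, expanding the determinant along that column immediately peels off a factor $(\lambda+1/c_3)$ and leaves a $3\times3$ determinant in $(P,L,\zeta_1)$.

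The main computational step is to evaluate that $3\times3$ determinant. I would substitute $u=\lambda+1$ and use the identity $q+2q_1=1-Q$ to expose the cancellations; one finds that each of the three second-order cofactors in the Laplace expansion carries a common factor $(\lambda+1)$, so the $3\times3$ determinant collapses to $(\lambda+1)$ times a quadratic. Combined with the $(\lambda+1/c_3)$ already extracted and the decoupled $(\lambda+1)$ from $\zeta_2$, I expect the full characteristic polynomial to factor as
$$\det(\lambda I-A)=(\lambda+1)^2\Big(\lambda+\tfrac{1}{c_3}\Big)\Big(\lambda^2+\big(Q+\tfrac{1}{\tau_0}\big)\lambda+\tfrac{1}{\tau_0}\Big).$$
This factorization is the crux, and arranging the algebra so that it actually reveals the repeated factor $(\lambda+1)$, rather than producing an opaque quintic, is the step I anticipate as the main obstacle; the conceptual signpost is the earlier remark that $L$ and $\zeta_1$ are linear functions of one another, which is what forces the degenerate direction and the repeated root at $-1$.

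Given the factorization, the conclusion is immediate and yields both directions of the \emph{iff} simultaneously. The roots $\lambda=-1$ (double) and $\lambda=-1/c_3$ lie strictly in the left half-plane for any admissible $\tau_0,c_3>0$ (a repeated negative eigenvalue still gives decay, so its multiplicity is harmless), hence stability is governed entirely by the quadratic factor $\lambda^2+(Q+1/\tau_0)\lambda+1/\tau_0$. For a monic quadratic both roots have negative real part precisely when both coefficients are positive; here the constant term $1/\tau_0$ is automatically positive, so the sole binding requirement is positivity of the linear coefficient, namely $Q+1/\tau_0>0$. This is exactly (\ref{thmq2}), and the strictness is sharp since at equality the quadratic has purely imaginary roots.

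As a consistency check I would verify this against the Routh--Hurwitz conditions (\ref{5x5rh1}): setting $q_2=0$ there, the associated cubic is $(\lambda+1/c_3)(\lambda^2+(Q+1/\tau_0)\lambda+1/\tau_0)$, and writing $s:=Q+1/\tau_0$ a short manipulation shows the two listed inequalities hold simultaneously if and only if $s>0$, matching the factorization route.
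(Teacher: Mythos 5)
Your proof is correct, and it takes a genuinely different route from the paper's. The paper does not compute the characteristic polynomial at all in the proof: it takes the Routh--Hurwitz inequalities (\ref{5x5rh1}) already stated for the full $5\times5$ system, specializes them to $q_{2}=0$ to get the pair of conditions (\ref{condq2}), and then proves the two-way algebraic equivalence of that pair with $Q+\frac{1}{\tau_{0}}>0$. You instead factor the characteristic polynomial explicitly, and your claimed factorization does check out: with $q_{2}=0$ the $\zeta_{2}$ row is $(0,0,0,0,-1)$, the $P_{a}$ column peels off $\left(\lambda+\frac{1}{c_{3}}\right)$, and in the remaining $3\times3$ determinant the three first-row minors are $(\lambda+1)(\lambda+Q)$, $q(\lambda+1)$, and $-q_{1}(\lambda+1)$, so the common factor $(\lambda+1)$ extracts and the constant term of the residual quadratic is $(Q+q+2q_{1})/\tau_{0}=1/\tau_{0}$, giving exactly $(\lambda+1)^{2}\left(\lambda+\frac{1}{c_{3}}\right)\left(\lambda^{2}+\left(Q+\frac{1}{\tau_{0}}\right)\lambda+\frac{1}{\tau_{0}}\right)$. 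Your approach buys more: it is self-contained (the paper's conditions (\ref{5x5rh1}) are asserted without derivation), it exhibits the full spectrum, it explains structurally why $c_{3}$ drops out of the stability threshold (its entire contribution sits in the harmless factor $\lambda+\frac{1}{c_{3}}$), and it identifies the marginal case $Q+\frac{1}{\tau_{0}}=0$ as a purely imaginary pair $\pm i/\sqrt{\tau_{0}}$. The paper's approach buys brevity given the machinery already set up for the general $q_{2}>0$ case. The only presentational weakness in your write-up is that the pivotal factorization is phrased as an expectation rather than a completed calculation; in a final version you should display the three minors and the cancellation $Q+q+2q_{1}=1$ explicitly, since that is where the entire content of the theorem lives.
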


\begin{proof}
Setting $q_{2}=0$, the necessary conditions become%
\begin{equation}
\left(  Q+\frac{1}{\tau_{0}}\right)  \left(  \frac{1}{\tau_{0}}+\frac{Q}%
{c_{3}}+\frac{1}{c_{3}\tau_{0}}+\frac{1}{c_{3}^{2}}\right)  >0\text{ and
}Q+\frac{1}{\tau_{0}}+\frac{1}{c_{3}}>0\label{condq2}%
\end{equation}
We prove this is equivalent to $Q+\frac{1}{\tau_{0}}>0$.

(i) Assume $Q+\frac{1}{\tau_{0}}>0$. Then clearly the second inequality in
(\ref{condq2}) is satisfied. Also, one has%
\begin{equation}
\frac{1}{\tau_{0}}+\frac{Q}{c_{3}}+\frac{1}{c_{3}\tau_{0}}+\frac{1}{c_{3}^{2}%
}=\left(  Q+\frac{1}{\tau_{0}}\right)  \left(  \frac{1}{c_{3}}\right)
+\frac{1}{\tau_{0}}+\frac{1}{c_{3}^{2}}>0,
\end{equation}
satisfying the first inequality.

(ii) Suppose (\ref{condq2}) holds. Then clearly%
\begin{equation}
0<\left(  Q+\frac{1}{\tau_{0}}+\frac{1}{c_{3}}\right)  \frac{1}{c_{3}}%
+\frac{1}{\tau_{0}}=\frac{1}{\tau_{0}}+\frac{Q}{c_{3}}+\frac{1}{c_{3}\tau_{0}%
}+\frac{1}{c_{3}^{2}},
\end{equation}
implying (\ref{thmq2}).
\end{proof}

\bigskip

\section{The effect of liquidity with or without sentiment}

In order to isolate the effect of liquidity, we eliminate the role of investor
sentiment and value by setting the associated parameters to zero. To this end,
we are left with the system%
\begin{align}
\tau_{0}P^{\prime} &  =L-P,\nonumber\\
cL^{\prime} &  =1+\left(  q-1\right)  L-qP.
\end{align}
One readily calculates that there will be positive eigenvalues of the
linearized system if and only if $q>1+\frac{c}{\tau_{0}}$ In other words, in a
system where only price and liquidity are relevant, a large amplitude $q$ of
liquidity is destabilizing while a large time scale for the liquidity is
stabilizing. The stability is illustrated in the Figure \ref{simplestmodel}.

Another nontrivial subcase is obtained from examining the full model
(\ref{fullsystemnon}) in the case where we set the value component of the
sentiment, $\zeta_{2}$, and the fundamental value equal to zero. We then have
the system of equations%
\begin{align}
\tau_{0}\frac{dP}{dt} &  =\left(  1+2\zeta_{1}\right)  L-P\nonumber\\
c\frac{dL}{dt} &  =1-L+q\left(  1+2\zeta_{1}\right)  L-qP\nonumber\\
c_{1}\frac{d\zeta_{1}}{dt} &  =q_{1}\left(  1+2\zeta\right)  \frac{L}{P}%
-q_{1}-\zeta_{1}\label{10}%
\end{align}

One then observes that the only equilibrium point is $L=P=L_{0}$ and $\zeta
=0$. Recalling that $Q:=1-q-2q_{1}$, one has the following.

\begin{theorem}
The system (\ref{10}) incorporating valuation and sentiment (with $c:=c_{1}$)
is stable if and only if
\begin{equation}
Q+\frac{c}{\tau_{0}}>0,
\end{equation}
i.e. if the perturbations from trend and valuation sentiment are sufficiently
small as a relative comparison to the timescale of reaction with respect to price.
\end{theorem}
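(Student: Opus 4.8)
The plan is to linearize the nonlinear system (\ref{10}) about its unique equilibrium $(P,L,\zeta_1)=(1,1,0)$ and apply the Routh--Hurwitz criterion to the resulting characteristic cubic, in the same spirit as the proof of the preceding theorem. Differentiating the three right-hand sides and dividing by the time constants $\tau_0,c,c$ respectively, I expect the Jacobian to be
\begin{equation}
A=\begin{pmatrix}-1/\tau_0 & 1/\tau_0 & 2/\tau_0\\ -q/c & (q-1)/c & 2q/c\\ -q_1/c & q_1/c & (2q_1-1)/c\end{pmatrix},
\end{equation}
which is exactly the $(P,L,\zeta_1)$ block of (\ref{fullsystem}) rescaled row by row. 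Writing the characteristic polynomial as $\lambda^3+a_1\lambda^2+a_2\lambda+a_3$, stability holds precisely when $a_1>0$, $a_3>0$, and $a_1a_2>a_3$.

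Next I would read off the coefficients from the trace, the principal $2\times2$ minors, and the determinant of $A$. I anticipate that the determinant collapses: factoring $1/\tau_0$ from the first row and $1/c$ from each of the other two, the remaining integer determinant equals $-1$ once $q+2q_1=1-Q$ is used, so that $a_3=1/(\tau_0c^2)>0$ holds unconditionally. The trace and minors should give $a_1=1/\tau_0+(1+Q)/c$ and $a_2=2/(\tau_0c)+Q/c^2$; multiplying by $c$ shows $a_1>0$ is equivalent to $Q+c/\tau_0+1>0$. Thus the only genuine stability condition is $a_1a_2>a_3$.

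The heart of the argument is the factorization of $a_1a_2-a_3$. Clearing denominators I expect
\begin{equation}
c^3\bigl(a_1a_2-a_3\bigr)=\Bigl(Q+\tfrac{c}{\tau_0}\Bigr)\Bigl(Q+\tfrac{2c}{\tau_0}+1\Bigr),
\end{equation}
which one verifies by viewing the left side as a quadratic in $Q$ whose discriminant is the perfect square $(c/\tau_0+1)^2$, yielding roots $Q=-c/\tau_0$ and $Q=-2c/\tau_0-1$. Granting this, if $Q+c/\tau_0>0$ then both factors are positive, so $a_1a_2>a_3$, and $a_1>0$ as well; hence the system is stable. Conversely, when $Q+c/\tau_0\le0$ the first factor is nonpositive, while the second equals $c\,a_1+c/\tau_0$ and so is strictly positive whenever $a_1>0$; this forces $a_1a_2-a_3\le0$, and if instead $a_1\le0$ the criterion fails outright. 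Either way stability is lost, which establishes the equivalence.

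The step I expect to be the main obstacle is guessing and confirming this clean factorization --- the algebra is brief but the cancellation is not transparent beforehand --- together with the care required in the final sign analysis to discard the spurious branch in which both factors are negative. That branch forces $Q<-2c/\tau_0-1$, contradicting $a_1>0$ (equivalently $Q>-c/\tau_0-1$); ruling it out is exactly what pins down $Q+c/\tau_0>0$ as both necessary and sufficient.
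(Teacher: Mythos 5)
Your proposal is correct: the Jacobian, the coefficients $a_1=\tfrac{1}{\tau_0}+\tfrac{1+Q}{c}$, $a_2=\tfrac{Q}{c^2}+\tfrac{2}{\tau_0 c}$, $a_3=\tfrac{1}{\tau_0 c^2}$, and the factorization $c^3(a_1a_2-a_3)=\bigl(Q+\tfrac{c}{\tau_0}\bigr)\bigl(Q+\tfrac{2c}{\tau_0}+1\bigr)$ all check out, and your sign analysis (including discarding the branch where both factors are negative, which would force $a_1<0$) correctly pins down $Q+\tfrac{c}{\tau_0}>0$ as necessary and sufficient. The route differs from the paper's, though. The paper first rescales to $c=c_1=1$ and then observes that $\lambda=-1$ is an eigenvalue of $A$ for \emph{all} parameter values --- rows $2$ and $3$ of $A+I$ are the proportional vectors $q\,(-1,1,2)$ and $q_1\,(-1,1,2)$ --- so the characteristic cubic splits as $(\lambda+1)\bigl(\lambda^2+(Q+\tfrac{1}{\tau_0})\lambda+\tfrac{1}{\tau_0}\bigr)$ and the stability condition is read off instantly from the quadratic's linear coefficient. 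Your ``mysterious'' factorization of $a_1a_2-a_3$ is exactly the algebraic shadow of that hidden root: for any cubic of the form $(\lambda+1)(\lambda^2+b\lambda+d)$ one has $a_1a_2-a_3=b(b+d+1)$, which with $b=Q+\tfrac{c}{\tau_0}$ and $d=\tfrac{c}{\tau_0}$ (after scaling) is your identity. So the paper's argument buys transparency --- it explains \emph{why} the cancellation happens and avoids the case analysis --- while yours is more mechanical but self-contained, keeps $c$ general rather than scaling it away, and doubles as an independent verification via the full Routh--Hurwitz criterion. One small caution: your remark that ``the only genuine stability condition is $a_1a_2>a_3$'' is loose, since $a_1>0$ is an independent Routh--Hurwitz requirement; your later case analysis does handle $a_1\le 0$ correctly, so nothing is actually missing.
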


\begin{proof}
By scaling, assume without loss of generality that $c_{1}=c=1$; then we can
linearize the system as follows:%
\begin{equation}
\left(
\begin{array}
[c]{c}%
P\\
L\\
\zeta
\end{array}
\right)  ^{\prime}=\left(
\begin{array}
[c]{ccc}%
-1/\tau_{0} & 1/\tau_{0} & 2/\tau_{0}\\
-q & q-1 & 2q\\
-q_{1} & q_{1} & 2q_{1}-1
\end{array}
\right)  \left(
\begin{array}
[c]{c}%
P\\
L\\
\zeta
\end{array}
\right)  =:A\left(
\begin{array}
[c]{c}%
P\\
L\\
\zeta
\end{array}
\right)  .
\end{equation}

Leaving aside the eigenvalue of $-1$ that is present for all values of the
parameters, the matrix $A$ has eigenvalues with positive real part if and only
if%
\begin{equation}
q+2q_{1}>1+\frac{1}{\tau_{0}}.
\end{equation}
After rescaling, this is the statement of the theorem.
\end{proof}

Furthermore, we have either zero or two roots with positive real parts, so
that we will have a stable spiral for $Q+\frac{c}{\tau_{0}}>0$ and an unstable
spiral for $Q-\frac{c}{\tau_{0}}<0$ for the equilibrium point at $\left(
1,1,0\right)  $. This matches our intuition from an economics perspective
since one has instability when $q+2q_{1}>1+\frac{c}{\tau_{0}}$, i.e., there
will be stability if $q+2q_{1}<1$ regardless of $c$ and $\tau_{0}$. For
$q+2q_{1}>1$, one sees that instability arises when $\frac{c}{\tau_{0}}$ is
sufficiently small, i.e. traders are focused on short term trends.

The analysis above clearly shows that the potential stability of a
crypto-asset may be contingent on several parameters that one may be able to
influence. With this information, further research may be useful to examine
the correlations and fit of these parameters with the effects of news and
government policy. A\ problem of future interest would be whether, and if so
how, governmental policy might be developed to diminish the volatility in
cryptocurrencies. Another alternative would be a decentralized cryptocurrency
with a concrete value. A good index to base this on would be either current or
future gross world product (which could be estimated via futures markets). For
a nominal fee, holders of this currency would be able to demand a basket of
underlying currencies (such as dollar, euro, yen, etc.), which would keep the
value of such a currency relatively close to its true fundamental value.

\bigskip

\begin{quote}
{\bf References and Notes}

\end{quote}

% Your references go at the end of the main text, and before the
% figures.  For this document we've used BibTeX, the .bib file
% scibib.bib, and the .bst file Science.bst.  The package scicite.sty
% was included to format the reference numbers according to *Science*
% style.

%BibTeX users: After compilation, comment out the following two lines and paste in
% the generated .bbl file. 

\bibliography{scibib}

\bibliographystyle{Science}

\section*{Acknowledgments}
N/A

%Here you should list the contents of your Supplementary Materials -- below is an example. 
%You should include a list of Supplementary figures, Tables, and any references that appear only in the SM. 
%Note that the reference numbering continues from the main text to the SM.
% In the example below, Refs. 4-10 were cited only in the SM.     
\section*{Supplementary materials}
N/A

\begin{figure}
\includegraphics[width=\textwidth]{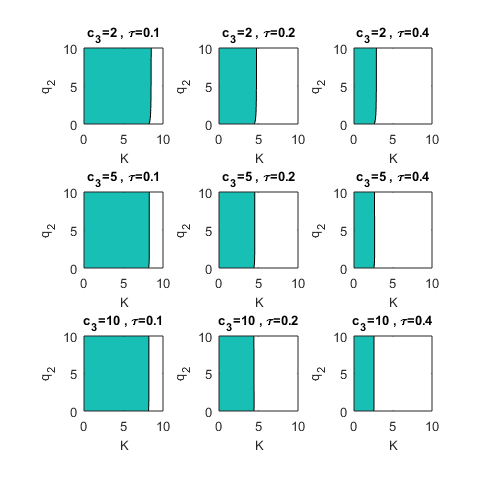}
\caption{Stability of the $5\times5$ system in the $K-q_{2}$ plane for
different values of the time scales $c_{3}$ and $\tau$. Increasing $c_{3}$ and
decreasing $\tau_{0}$ increases the region of linear stability for the
equations.}
\label{ptb}
\end{figure}

\begin{figure}
\includegraphics[width=\textwidth]{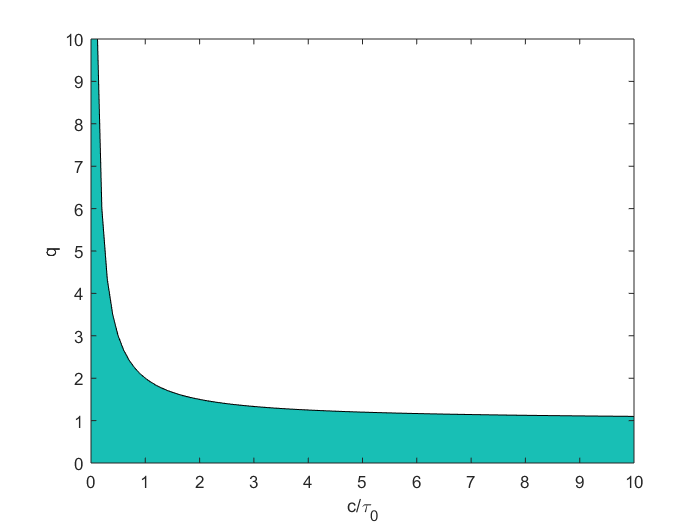}
\caption{Stability for our simplified model without the presence of
fundamental value or sentiment. The system is stable in the shaded region for
the parameters $q$ and $\frac{c}{\tau_{0}}$.}%
\label{simplestmodel}
\label{Fig 5x5}
\end{figure}

\end{document}